\documentclass[runningheads]{llncs}
\usepackage[T1]{fontenc}
\usepackage{tabularx}
\usepackage{graphicx}
\usepackage{times}
\usepackage{soul}
\usepackage{url}
\usepackage{hyperref} 
\usepackage[utf8]{inputenc}
\usepackage[small]{caption}
\usepackage{graphicx}
\usepackage{amsmath}
\DeclareMathOperator*{\argmin}{arg\,min}
\usepackage{booktabs}
\usepackage{algorithm}
\usepackage{algorithm}
\usepackage[switch]{lineno}
\usepackage{makecell} 
\usepackage{amssymb}
\usepackage{mdframed}
\usepackage{algpseudocode}
\usepackage{enumitem}
\usepackage{graphicx}
\usepackage{wrapfig}
\usepackage{array}
\usepackage{color}

\begin{document}
\title{Strategic Infrastructure Design via Multi-Agent Congestion Games with Joint Placement and Pricing}
\titlerunning{Strategic Infrastructure Design}
%
\author{Niloofar Aminikalibar 
\and
Farzaneh Farhadi
\and
Maria Chli}
%
\authorrunning{N. Aminikalibar et al.}
%
\institute{Aston University, Birmingham, UK, B4 7ET
\email{\{namin21,f.farhadi,m.chli\}@aston.ac.uk}}
\maketitle              
\begin{abstract}
Real-world infrastructure planning increasingly involves strategic interactions among autonomous agents competing over congestible, limited resources. Applications such as Electric Vehicle (EV) charging, emergency response, and intelligent transportation require coordinated resource placement and pricing decisions, while anticipating the adaptive behaviour of decentralised, self-interested agents.
We propose a novel multi-agent framework for joint placement and pricing under such interactions, formalised as a bi-level optimisation model. The upper level represents a central planner, while the lower level captures agent responses via coupled non-atomic congestion games. Motivated by the EV charging domain, we study a setting where a central planner provisions chargers and road capacity under budget and profitability constraints.
The agent population includes both EV drivers and non-charging drivers (NCDs), who respond to congestion, delays, and costs. To solve the resulting NP-hard problem, we introduce ABO-MPN, a double-layer approximation framework that decouples agent types, applies integer adjustment and rounding, and targets high-impact placement and pricing decisions. Experiments on benchmark networks show that our model reduces social cost by up to 40\% compared to placement- or pricing-only baselines, and generalises to other MAS-relevant domains.

\keywords{Non-atomic Congestion Game\and  Multiple Congestible Resources \and Strategic agents \and Game-Theoretic Planning \and EV Charging Infrastructure }
\end{abstract}

\section{Introduction}

Many real-world planning tasks involve \emph{resource provisioning with strategic agents}, where a central planner allocates congestible resources while anticipating agent responses. Applications span facility location, cloud provisioning, and intelligent transportation. While prior work often optimises siting or pricing in isolation, jointly optimising both under congestion-aware, cost-sensitive behaviour presents greater complexity. Such problems naturally induce Nash Equilibrium (NE) responses from agents.

We focus on a key instance: EV fast-charging infrastructure planning. With EVs central to global decarbonisation~\cite{bibra2023global}, scalable and efficient charging networks are critical. Planning is complicated by the interdependence between station viability and strategic driver behaviour. A unified optimisation framework must model equilibrium interactions and jointly determine pricing and placement.

In this setting, a planner provisions two coupled resources, charging stations and road links, while accounting for heterogeneous agents (EV drivers and NCDs) with distinct Origin-Destination (O-D) pairs and cost structures based on time, delay, and fees. This MAS challenge demands integration of decentralised behaviour with centralised, explainable planning.
We model this as a mixed integer non-linear programming (MINLP), capturing joint placement and pricing under agent equilibrium and proving NP-hardness. We derive a tractable NE reformulation for the coupled congestion game, and introduce ABO-MPN, a scalable two-layer approximation combining behavioural decomposition and integer adjustment. Empirical results show 10-40\% social cost reduction over baselines.

\section{Literature Review} \label{sec:literature}

Research on congestion games has established a foundational framework for modelling strategic interactions over shared resources. Rosenthal’s seminal work \cite{RN178} introduced pure-strategy equilibria in non-atomic congestion games, later extended to capture agent heterogeneity, resource-specific costs, and continuous strategies \cite{milchtaich1996congestion}—key for transportation, cloud computing, and network routing \cite{roughgarden2002bad,wardrop1952road}.
However, most studies focus on single-resource settings such as road networks \cite{RN57}, public facilities \cite{aboolian2023location}, or cloud bandwidth \cite{sun2021price}, optimising either placement \cite{RN57,jung2014stochastic,RN194,RN167} or pricing \cite{RN153,RN140,zhao2021dynamic,xiong2016optimal}, rarely both. When joint siting and pricing are considered, agent responses are often simplified or static (e.g., \cite{RN176,RN155}), without fully modelling congestion-aware equilibrium.
Our work advances this by explicitly modelling two congestible resource types, charging stations and roads, capturing coupled agent behaviour via equilibrium constraints derived from non-atomic congestion games, and solving the resulting bi-level MINLP using a novel double-layer approximation method (ABO-MPN).

From a MAS perspective, this represents progress in equilibrium-aware planning: Previous MAS approaches have used agent-based simulations (ABS)~\cite{sadeghi2023agamas}, Reinforcement Lea-rning (RL)~\cite{RN155}, or local coordination~\cite{joe2021coordinating}, but rarely structured optimisation with embedded equilibrium reasoning, limiting interpretability.
Building on and generalising Network Design Problems~\cite{farahani2013review}, which assumes static traffic, we address joint placement and pricing under dynamic, strategic agent responses. This enables interpretable, policy-driven infrastructure planning across domains such as EV charging, cloud provisioning, and emergency response.
Unlike ABS or black-box RL, our model delivers transparent, equilibrium-aware policies grounded in game theory, facilitating coordinated decision-making among self-interested agents.

\section{Strategic Multi-Agent Model} \label{sec:section2}

We formalise EV infrastructure planning as a multi-agent, congestion-aware optimisation problem in which a central authority allocates congestible resources, charging stations and road links, while anticipating the strategic responses of self-interested agents. The authority determines station placement and pricing but does not intervene in individual routing choices or impose traffic controls. The model integrates discrete infrastructure decisions with the equilibrium behaviour of heterogeneous agents, including EV drivers and NCDs, navigating a shared network.

\subsection{Network and Agent Definitions}\label{sec:Charging Network Architecture}

\noindent \textbf{Network:} The transportation system is modelled as a directed graph \( G = (N, A) \), where \( N \) is the set of nodes and \( A \) the set of links. Each link \( l \in A \) has length \( d_l \) and capacity \( c_l \). For each O-D pair \( \omega \in W \), let \( R_\omega \) denote the set of feasible routes. To incorporate charging decisions, we define the set of \emph{extended paths} \( P_\omega \), where each path \( p = (r, i) \) combines a route \( r \in R_\omega \) with a charging node \( i \in r \). The mapping \( s(p) = i \) specifies the charging location on path \( p \), enabling joint modelling of road and station congestion.

\noindent \textbf{Agents:} There are two agent types:

\noindent \textit{1- Non-Charging Drivers (NCDs):} These include non-EVs and EVs that do not need en-route charging. They select routes \( r \in R_\omega \) to minimise travel time and contribute only to road congestion.

\noindent \textit{2- EV Drivers:} These agents require en-route charging and select extended paths \( p \in P_\omega \) to minimise a cost function including travel time, queuing delay, and charging fees. Their choices affect both road and station congestion.

We assume a non-atomic setting where agents are infinitesimal and act independently, while an atomic variant of this problem is studied in~\cite{niloofar2}. For each \( \omega \in W \), let \( \gamma_\omega \) and \( \gamma^0_\omega \) denote the demand of EV drivers and NCDs, respectively. The flow distributions \( \mathbf{q}_\omega = (q_{\omega,p})_{p \in P_\omega} \) and \( \mathbf{q}^0_\omega = (q^0_{\omega,r})_{r \in R_\omega} \) represent the equilibrium traffic distribution for EV drivers and NCDs respectively.

\noindent \textbf{Planner:} The Government Authority (GA) determines the number of chargers \( x_i \in \mathbb{Z}_{\geq 0} \) and charging price \( y_i \geq 0 \) at each node \( i \in N \). These decisions consider electricity prices \( e_i \), maintenance and rental costs \( T_i \), a total charger budget \( B \), and a profitability constraint requiring that revenue at each node covers operational costs by a factor of \( \pi > 1 \). Full constraints are detailed in Section~\ref{sec:GA-objective}.

\subsection{Cost Structure and Agent Behaviour} \label{sec: behaviour factors}

Each EV driver associated with O-D pair \( \omega \) minimises a weighted sum of travel time, queuing delay, and charging fee. Let \( \lambda_1 \), \( \lambda_2 \), and \( \lambda_3 \) denote the respective weights. The perceived cost for choosing extended path \( p \in P_\omega \) is:

\begin{align}
C_{\omega, p}(\mathbf{Q}) = \lambda_1 F_{\omega, p} + \lambda_2 G_p + \lambda_3 Y_p,
\end{align}
with expected cost:

\begin{align}
C_{\omega}(\mathbf{Q}) = \sum_{p \in P_\omega} q_{\omega, p} \, C_{\omega, p}(\mathbf{Q}). \label{eq-EVcost}
\end{align}

Here, \( F_{\omega, p} \) captures total travel time over path \( p \), derived from congestion-dependent link costs \( f_l = d_l (\gamma_l + \gamma^0_l)/c_l \). The queuing delay \( G_p \) estimates expected wait time at the selected station \( s(p) \), and \( Y_p \) is the flat charging fee at that station.

\noindent NCDs follow the same structure, excluding charging-related terms, with expected cost:

\begin{align}
C^0_{\omega}(\mathbf{Q}) = \sum_{r \in R_\omega} q^0_{\omega, r} \lambda_1 F_{\omega, r}, \label{eq-NonEVcost}
\end{align}

with \( F_{\omega, r} \) defined analogously over route \( r \).

This model follows the assumptions in~\cite{niloofar}, where formulation details are provided.

\subsection{Government Authority’s Optimisation Problem} \label{sec:GA-objective}

The GA acts as a central planner and aims to minimise the total system-wide social cost across all EV drivers and NCDs:\vspace{-10pt}
\begin{equation}
\Theta(\mathbf{Q}) = \sum_{\omega \in W} \gamma_\omega C_\omega(\mathbf{Q}) + \gamma^0_\omega C^0_\omega(\mathbf{Q}).
\end{equation}

To achieve this, it jointly determines the number of chargers $x_i$ and the charging prices $y_i$ at each node $i \in N$, subject to the following constraints:

\noindent\textbf{Budget constraint:} The total number of installed chargers across the network must not exceed the available budget $B$, i.e., $\sum_{i \in N} x_i \leq B$.

\noindent\textbf{Profitability constraint:} At each charging location $i$, the total revenue from charging fees must cover $e_i$ and $T_i$, while also achieving a profit; we consider this profit by $\pi > 1$ coefficient (e.g. $\pi=1.2$ equals 20\% profit margin).

\noindent\textbf{Equilibrium constraints:} The strategy profiles $\mathbf{q}_\omega$ and $\mathbf{q}^0_\omega$ must form an NE under the GA's placement and pricing decisions. That is, no agent can unilaterally reduce their cost by changing strategies. These constraints are non-trivial to derive, as any provisioning policy induces a coupled congestion game among heterogeneous agents. Their equilibrium behaviour depends jointly on road congestion and charging infrastructure conditions. The GA must therefore anticipate these strategic interactions to evaluate its decisions. The formulation of NE constraints is detailed in Section \ref{sec:game-dynamics}.

\noindent\textbf{Feasibility constraints:} Charger counts must be non-negative integers ($x_i \in \mathbb{Z}_{\geq 0}$), prices must be non-negative ($y_i \geq 0$), and demand distributions $q_{\omega,p}$ and $q^0_{\omega,r}$ must lie in $[0,1]$ and sum to one for each O-D pair $\omega$.

\noindent Formally, the GA solves the following bi-level optimisation problem:
\begin{equation*}\label{GA Problem}
\begin{aligned}
&\min_{x,y,\mathbf{Q}} \Theta(\mathbf{Q}) \qquad \qquad \qquad \qquad \qquad\text{\textbf{GA Problem}}\\
  \text{s.t.} \quad &\sum_{i \in N} x_{i}  \leq B,\\ 
      &\pi\left(\sum_{\omega, p: s(p)=i} \gamma_{\omega} q_{\omega,p} e_{i} + x_{i} T_{i}\right) \leq \sum_{\omega, p: s(p)=i} \gamma_{\omega} q_{\omega, p} y_{i}, \quad \forall i,\\
      &\mathbf{q}_\omega \in \argmin_{\mathbf{q}_\omega} C_\omega(\mathbf{Q}), \quad \forall \omega, \quad \text{and } \quad \mathbf{q}^0_\omega \in \argmin_{\mathbf{q}^0_\omega} C^0_\omega(\mathbf{Q}), \quad  \forall \omega,\\
      & \text{Feasibility constraints.}
\end{aligned}
\end{equation*}
This bilevel model captures the interplay between centralised planning and decentralised agent responses, embedding agents’ optimisation into the planner’s objective via implicit equilibrium constraints. To enable tractable reformulation, we next derive explicit conditions representing agent interactions.

\section{Agent Equilibrium Modelling and Reformulation} \label{sec:game-dynamics}

To model decentralised, self-interested agent behaviour, we formulate a coupled congestion game capturing interactions between EV drivers and NCDs. We derive explicit NE conditions to replace the nested optimisation, enabling direct integration of agent responses into the planner’s model and supporting solution approaches.

\subsection{Coupled Congestion Games}

Each driver’s cost (Equations~\eqref{eq-EVcost} and~\eqref{eq-NonEVcost}) depends on congestion across shared resources, road links and, for EVs, charging stations, inducing a system of coupled congestion games that capture agent interactions over the infrastructure.

\noindent\textbf{Road congestion:} All EV drivers and NCDs contribute to traffic on road links. Their routing decisions affect congestion levels and, in turn, travel costs across the network.
    
\noindent\textbf{Charging congestion:} EV drivers also contribute to congestion at charging stations. Their choice of charging location influences queueing delays and financial costs, which in turn affect their overall extended path selection.

For EV drivers, route and charging station choices are jointly made to minimise a combined cost of travel time, queuing delay, and charging fees. This coupling links the two congestion games, as EV drivers’ decisions influence both domains. Consequently, the equilibrium demand reflects strategic behaviour over the integrated transport and charging network. This interdependence is embedded in the GA’s problem through implicit equilibrium constraints, which the next subsection makes explicit for tractability.

\subsection{Nash Equilibrium Conditions} \label{sec:nash}

In non-atomic congestion games, where each agent is infinitesimal and cannot individually impact the system, a demand distribution is at equilibrium if no agent can reduce their cost by unilaterally changing their decision~\cite{roughgarden2002bad}. In our setting, this implies that for each driver type, 
all strategies (i.e., extended paths for EV drivers and routes for NCDs) used with nonzero flow must yield the minimum possible cost.

For EV drivers, the equilibrium condition is:
\begin{equation}\label{eq:NE-EV}
q_{\omega,p} C_{\omega,p}(\mathbf{Q}) \leq q_{\omega,p} C_{\omega,p'}(\mathbf{Q}), \quad \forall \omega \in W, \; \forall p, p' \in P_\omega.
\end{equation}

For NCDs, a similar equilibrium condition holds for all routes in $R_\omega$.

These inequalities ensure demand is allocated only to cost-minimising strategies, jointly characterising the NE of the coupled congestion game. While multiple equilibria may exist, it is well established that in non-atomic congestion games with convex cost functions (with respect to demand distributions), all equilibria induce the same total system cost~\cite{roughgarden2002bad}. Therefore, incorporating any equilibrium profile consistent with Equation~\eqref{eq:NE-EV} suffices to ensure alignment with the GA’s social welfare objective.

This explicit characterisation replaces the original optimisation-based equilibrium constraints with tractable inequalities, enabling their direct integration into the GA’s decision problem. However, as shown in the next subsection, the resulting formulation remains computationally intractable. This motivates the approximation framework introduced in the following section.

\subsection{Computational Complexity of the Reformulated Problem}\label{sec:complexity}

Even with NE constraints explicitly stated (Equation~\eqref{eq:NE-EV}), the reformulated GA problem remains computationally intractable, as formalised in the theorem below.

\begin{theorem}
The GA's optimisation problem, after reformulating the equilibrium constraints as explicit inequalities, is NP-hard.
\end{theorem}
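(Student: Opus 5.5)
We prove NP-hardness by a polynomial-time reduction from a known NP-hard problem to a restricted special case of the reformulated GA problem. Hardness of a special case implies hardness of the general problem. The natural source is the \emph{Knapsack} problem (equivalently, an integer-constrained facility location variant). It matches the integer charger variables $x_i\in\mathbb{Z}_{\geq 0}$, the budget $\sum_i x_i\le B$, and the node-wise coupling introduced by the profitability constraint.

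\textbf{Step 1: restrict the instance.} Set $\gamma^0_\omega=0$ for all $\omega$, so there are no NCDs. Take $\lambda_1=0$, or use a network whose routes all have identical, flow-independent travel times. Then the road congestion game is trivial and only the station choice matters. Construct one O-D pair per knapsack item $k$. Its only extended path passes through a dedicated node $i_k$, plus a fallback path through a node $i_0$ with a large fixed cost. Choose the queuing delay $G_p$ from the model of~\cite{niloofar} so that it is prohibitively large when $x_{i_k}=0$ and small otherwise.

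\textbf{Step 2: encode item size and value.} Item $k$ needs $w_k$ chargers at $i_k$ before its demand $\gamma_{\omega_k}$ is served at low delay. We can enforce this through the profitability constraint $\pi(\gamma e_i + x_iT_i)\le \gamma y_i$ together with a price cap implied by the equilibrium. Alternatively we calibrate the delay function so that it drops to the low value only once $x_{i_k}\ge w_k$. The social cost saving from serving item $k$ then equals $\gamma_{\omega_k}$ times the delay gap, which we set proportional to the value $v_k$.

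Because each O-D pair has essentially two options, the NE inequalities~\eqref{eq:NE-EV} pin down the equilibrium flow uniquely for any $(x,y)$. Once $y$ is set at the profitability threshold, the price terms contribute a cost that is constant or accounted for. Hence
\begin{equation*}
\min\Theta\le \Theta_{\mathrm{base}}-V \iff \exists\, S \text{ with } \sum_{k\in S} w_k\le B \text{ and } \sum_{k\in S} v_k\ge V,
\end{equation*}
where $\Theta_{\mathrm{base}}$ is the social cost when every item uses its fallback path. The instance size is polynomial in the knapsack input, so the reduction is polynomial.

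\textbf{Main obstacle.} The hardest step is Step 2: making the paper's specific delay function $G_p$ and fee term $Y_p$ realise the step behaviour "size $w_k$ required, value $v_k$ obtained" exactly. Fees also enter $\Theta$, which could shift the comparison. My first attempt would be to choose $T_i=0$ and $e_i$ uniform. Then $y_i=\pi e_i$ is optimal everywhere, and the fee term becomes an identical constant across all paths.

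If exact thresholds cannot be obtained from $G_p$, I would fall back to a reduction in which $w_k=1$ for every item. The combinatorial difficulty then comes from the overlap of routes sharing candidate nodes, reducing from Set Cover or Max Coverage. An O-D pair is cheap only if some node on its routes receives a charger, and the budget $B$ bounds the number of opened nodes. This uses the same equilibrium argument and needs only the property that $G_p$ is much smaller when $x_i\ge1$ than when $x_i=0$.
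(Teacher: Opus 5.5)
\textbf{There is a genuine gap.} Step~2, where all of the hardness has to come from, is never constructed. As specified, it fails for three concrete reasons.

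\emph{(i) The profitability constraint points the wrong way.} For a given flow and an equilibrium-imposed price cap, $\pi(\gamma e_i + x_iT_i)\le \gamma y_i$ can only \emph{upper}-bound $x_i$. With your choice $T_i=0$ it does not involve $x_i$ at all. So it can never force $x_{i_k}\ge w_k$, and item sizes must come entirely from $G_p$.

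\emph{(ii) You cannot calibrate $G_p$.} The delay function is fixed by the model. You may choose instance data (the network, $\gamma_\omega$, $\mu$, $e_i$, $T_i$, $\lambda_1,\lambda_2,\lambda_3$, $B$), not the function itself. Suppose you extract some threshold, for example from Wardrop equalisation against the fallback or from queue stability. Any standard queueing delay keeps strictly decreasing as chargers are added beyond it, so item $k$ does not deliver a fixed value $v_k$ at $w_k$ chargers. You would need explicit parameters showing that these post-threshold gains cannot change the optimal selection. Otherwise, with purely diminishing returns, a separable allocation under one budget is solved optimally by greedy marginal allocation, and no Knapsack structure survives.

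\emph{(iii) The fallback has no large fixed cost under your own choices.} Link costs $d_l(\gamma_l+\gamma^0_l)/c_l$ are never flow-independent unless $d_l=0$. Setting $\lambda_1=0$ removes travel time anyway. With $T_i=0$ and uniform $e_i$, all fees coincide. The only cost left at $i_0$ is its queueing delay, and that queue needs chargers from the same budget $B$ and is shared by every item that falls back. Item savings are therefore not additive, and the planner can trade chargers between $i_0$ and the $i_k$. The claimed equivalence $\min\Theta\le\Theta_{\mathrm{base}}-V \iff$ Knapsack does not follow.

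\textbf{Comparison with the paper.} The paper builds no gadget. It cites NP-hardness of bilevel MINLPs as a class, and notes that with fixed prices and the equilibrium constraints dropped, the problem is budget-constrained capacitated facility location. Your framing is the logically correct direction: prove hardness of a restricted instance with the equilibrium inequalities still in force. Hardness of a general class, or of a relaxation, does not transfer by itself. Your Set Cover fallback is the closest rigorous counterpart of the paper's facility-location remark.

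\textbf{The Set Cover fallback.} It is also only sketched, and it rests on an unchecked property of $G_p$. What you actually need is that a station with $x_i=0$ cannot carry positive EV flow at all, i.e.\ its delay is infinite or undefined. With that property, you can reduce Set Cover to \emph{feasibility} as follows:
\begin{enumerate}
\item Create one O-D pair per element, able to charge only at the nodes of the sets containing it. Charging at origins and destinations must be excluded from $P_\omega$ or neutralised.
\item Make demands small enough that one charger per opened node keeps every queue stable.
\item Take $T_i=0$ and $y_i=\pi e_i$.
\end{enumerate}
Then a feasible $(x,y,\mathbf{Q})$ with $\sum_i x_i\le B$ exists iff a cover of size at most $B$ exists.

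The weaker property you state, that $G_p$ is much smaller when $x_i\ge 1$, does not suffice. Uncovered O-D pairs would then share the finite, flow-dependent delays of unopened nodes, and you are back to analysing a coupled cost function. Your route becomes a proof only once this construction is written down and checked against the actual form of $G_p$.
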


\begin{proof}
The original formulation is a bilevel MINLP, where the upper-level decisions $(x, y)$ determine charging station placement and pricing, and the lower-level captures agent equilibrium distributions $\mathbf{Q}$. Solving bilevel MINLPs is known to be NP-hard, even under linear cost assumptions~\cite{benayed1990}. The reformulated version replaces the lower-level program with equivalent NE constraints. Solving it in polynomial time would imply a polynomial-time solution to the original bilevel problem, which would in turn imply that NP = P, a contradiction to standard complexity-theoretic assumptions.

Moreover, even with fixed pricing and without equilibrium constraints, the problem reduces to a budget-constrained capacitated facility location problem, which is NP-hard \cite{cornuejols1977float}. The full model thus inherits this computational hardness. \qed
\end{proof}

\section{Solution Methodology} \label{sec:method}

Given the NP-hardness of the GA’s reformulated optimisation problem (Section~\ref{sec:complexity}), exact methods become impractical for large networks. To address this, we propose the \textbf{A}gent-\textbf{B}ased \textbf{O}ptimisation of \textbf{M}ulti-\textbf{P}arameter \textbf{N}etworks (\textbf{ABO-MPN}), a double-layer framework that balances tractability and solution quality.

\subsection{Layer 1: Behavioural Decomposition Based on EV Penetration} \label{sec:level1}

We exploit driver population asymmetry to decompose the problem. When EV penetration ($\alpha = \frac{\sum_\omega \gamma_\omega}{\sum_\omega \gamma_\omega + \gamma^0_\omega}$) is low, their impact on congestion is minimal \cite{RN167}. We first solve a simplified GA-NCD subproblem, excluding EV drivers and chargers, to find NCD equilibrium strategies. These serve as fixed background traffic in the GA-EV subproblem, which then optimises charger placement, pricing, and EV behaviour. This decomposition reduces decision variables per stage, enabling more efficient and scalable analysis.

As EV penetration rate increases, their impact on road usage becomes non negligible, and the independence assumption breaks down. In such cases, we apply an iterative refinement procedure: NCD and EV responses are alternately optimised while holding the other fixed, until convergence (Algorithm~\ref{alg:level1}).

\noindent
\begin{minipage}[t]{0.54\textwidth}

\begin{algorithm}[H]
\caption{ABO-MPN (Layer 1): Decomposition of Driver Behaviour Analysis}
\label{alg:level1}
\begin{algorithmic}[1]
\Require EV penetration rate \( \alpha \)
\If{ \( \alpha \) is sufficiently low }
    \State \textbf{Stage 1: Solve GA-NCD subproblem}
    \State Solve GA-NCD to compute NCD demand distribution \( \mathbf{q}^{0}_\omega \)
    \State \textbf{Stage 2: Solve GA-EV subproblem}
    \State Fix \( \mathbf{q}^{0}_\omega \) as exogenous background traffic
    \State Solve GA-EV with fixed background to obtain \( x^*, y^*, \mathbf{q}_\omega \)
\Else
    \State \textbf{Iterative Refinement Procedure}
    \State Initialise \( \mathbf{q}^{0}_\omega \)
    \Repeat
        \State Fix \( \mathbf{q}^{0}_\omega \), solve GA-EV to update \( x, y, \mathbf{q}_\omega \)
        \State Fix \( \mathbf{q}_\omega \), solve GA-NCD to update \( \mathbf{q}^{0}_\omega \)
    \Until{ Convergence to stable equilibrium }
\EndIf
\State \Return \( x^*, y^*, \mathbf{q}_\omega, \mathbf{q}^{0}_\omega \)
\end{algorithmic}
\end{algorithm}
\end{minipage}%
\hfill%
\vrule width 0.2pt%
\hfill%
\begin{minipage}[t]{0.41\textwidth}

\begin{algorithm}[H]
\caption{ABO-MPN (Layer 2): Integer Adjustment of Placement Decisions}
\label{algorithm 2}
\begin{algorithmic}[1]
\State \textbf{Relax} the placement variables: \( x_i \in \mathbb{R}_{\geq 0} \) for all \( i \in N \)
\State \textbf{Solve} the relaxed GA-EV problem, yielding relaxed placement \( \mathbf{x}^* \)
\State \textbf{Sort} fractional parts: \( x^*_i - \lfloor x^*_i \rfloor \) in descending order
\State \textbf{Floor} each value: \( x^{*'}_i = \lfloor x^*_i \rfloor \)
\State \textbf{Initial sum}: \( S_{\text{initial}} = \sum_i x^{*'}_i \)
\State \textbf{Adjustment}: \( \Delta = \sum_i x^*_i - S_{\text{initial}} \)
\For{$i = 1$ to $\Delta$}
    \State Increment largest fractions: \( x^{*'}_i = x^{*'}_i + 1 \)
\EndFor
\State \textbf{Fix adjusted placements:} \( \mathbf{x}^{*'} \)
\State \textbf{Re-solve GA-EV} to optimise pricing \( y \) and EV distribution \( \mathbf{q}_\omega \)

\end{algorithmic}
\end{algorithm}
\end{minipage}

\subsection{Layer 2: Integer Adjustment for Placement Decisions}

The GA-EV subproblem involves integer charger placement variables that limit scalability. To overcome this, we first solve a relaxed version with continuous variables, then apply a budget-preserving rounding algorithm (Algorithm~\ref{algorithm 2}). It sorts and selectively rounds fractional values to maintain feasibility and minimise deviation. A final re-optimisation updates pricing and EV flows based on fixed placements. As shown in Section~\ref{sec:numerical_results}, this approach yields near-optimal designs with minimal error and negligible impact on solution quality.

\section{Experimental Evaluation} \label{sec:numerical_results}

We evaluate the ABO-MPN framework on standard transportation networks, focusing on solution quality and approximation accuracy. The results demonstrate that ABO-MPN delivers near-optimal infrastructure policies with substantial social cost reductions, while maintaining computational tractability.
The framework is implemented in Python using Pyomo and solved with the IPOPT nonlinear solver. While multiple networks are considered, detailed results are presented for the widely used Nguyen-Dupuis (ND) network~\cite{N_D}, a common benchmark in EV infrastructure planning~\cite {RN194,fathollahi2023optimal,ferro2024planning}.

\begin{figure}[t]
    \centering
    \begin{minipage}{0.45\linewidth}
        \centering
        \includegraphics[width=\linewidth]{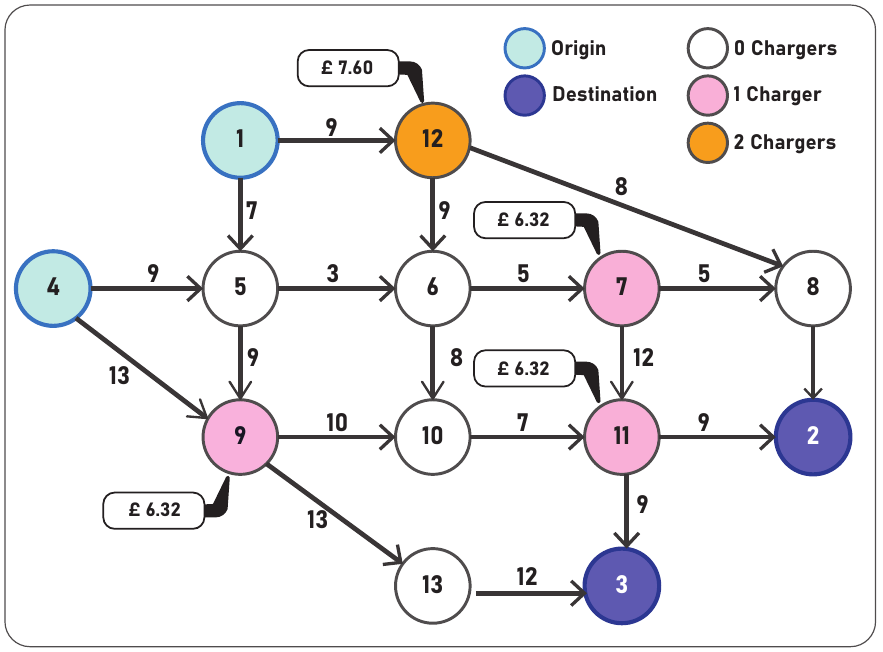}
        \caption{\textbf{Nguyen-Dupuis Network:} \small Optimal placement and pricing in the network}
        \label{fig:5}
    \end{minipage}
    \hfill
    \begin{minipage}{0.54\linewidth}
        \centering
        \begin{table}[H] 
        \caption{ABO-MPN Layer 2 Accuracy Evaluation}
        \label{tab:objective_values_comparison}
        \renewcommand{\arraystretch}{1.2}
        \begin{small}
        \begin{tabular}{|l|c|c|c|c|}
        \hline
        \textbf{Metric} & 
        \makecell{$\lambda_2 = 0.5$\\$B \geq 7$} & 
        \makecell{$\lambda_2 = 2$\\$B \geq 7$} & 
        \makecell{$\lambda_2 = 4$\\$B \geq 7$} & 
        \makecell{$\lambda_2 = 2$\\$B = 3$} \\
        \hline
        \makecell{Social Cost\\(Adjusted $x$)} & 5388.64 & 5605.75 & 5819.28 & 5841.38 \\
        \hline
        \makecell{Social Cost\\(Relaxed $x$)}  & 5349.64 & 5604.21 & 5815.10 & 5787.46 \\
        \hline
        \makecell{Difference\\(\%)}            & 0.72    & 0.02    & 0.07    & 0.90 \\
        \hline
        \end{tabular}
        \end{small}
        \end{table}
    \end{minipage}
\end{figure}

\subsection{Benchmark Network and Parameters} \label{n-d network}

The ND network consists of 13 nodes, 19 links, and 4 O-D pairs: $W = \{(1, 2), (1, 3)$, \\$(4, 2), (4, 3)\}$ (Figure~\ref{fig:5}). Following the SUMO-based study~\cite{RN194}, we use 10 common routes and 54 extended paths. Key parameters are: $(\lambda_1 ,\lambda_2, \lambda_3) = (1,2,3)$, $c_l = 200$, $\mu = 4$, $T_i = 10$, and $\alpha = 13\%$, with 460 driver agents ($\gamma_\omega=15, \gamma^0_\omega=100$ for all O-D pairs). Electricity prices $e_i$ are drawn from $[5, 7]$ for non-O-D nodes and $[11, 15]$ for O-D nodes to reflect urban grid stress.

\subsection{Evaluating ABO-MPN on a Standard Test Case} \label{sec:technique}

We solve the GA problem using ABO-MPN to optimise charger placement, pricing, and EV behaviour. Figure~\ref{fig:Sensitivity on B} (left) shows social cost declines with increasing budget, but plateaus at $B = 7$, beyond which additional chargers yield no further benefit. This plateau occurs because adding chargers beyond $B = 7$ results in over-installation in low-demand areas, increasing costs without improving access or reducing queues. The optimal policy at $B \geq 7$ installs chargers at nodes 7, 9, 11, and 12. Figure~\ref{fig:5} illustrates the resulting network configuration.

\begin{figure}[t]
    \centering
    \includegraphics[width=0.9\linewidth]{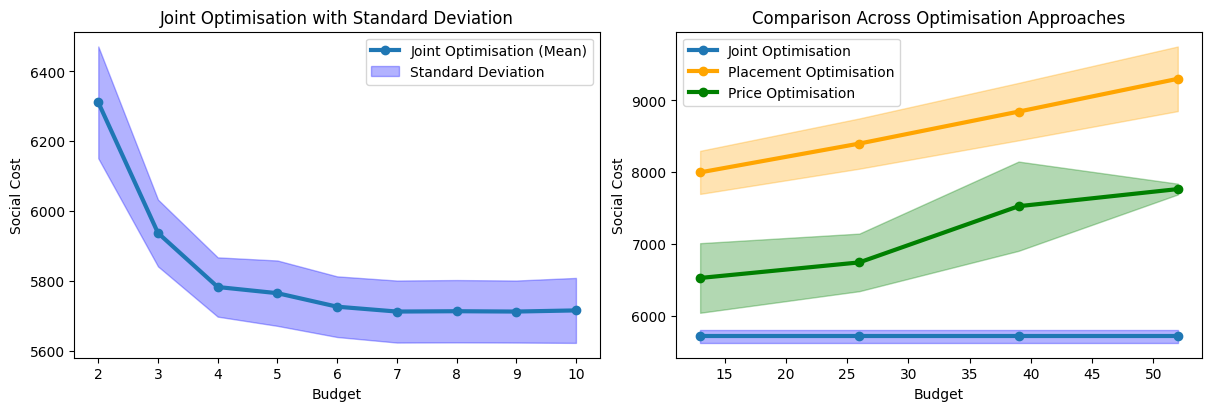}
    \caption{\small \textit{Left:} Social cost vs.\ budget for the joint model, showing diminishing returns beyond $B{=}7$, with shaded regions indicating standard deviation over random parameters across $e$ values. \textit{Right:} Joint Optimisation outperforms benchmarks by at least 10\% across all budgets, with over 40\% gain at $B{=}52$, highlighting the value of joint placement and pricing.}
    \label{fig:Sensitivity on B}
\end{figure}

\subsection{Comparison with State-of-the-Art Methods} \label{sec:comparison}

We benchmark our Joint Optimisation framework against two baselines: \textit{Price Optimisation}, which uses uniform charger placement and optimises pricing, and \textit{Placement Optimisation}, which fixes pricing and optimises locations—following prior works (Section~\ref{sec:literature}).
As shown in Figure~\ref{fig:Sensitivity on B} (right), Joint Optimisation consistently outperforms both, with social cost improvements starting at 10\% and exceeding 40\% at $B = 52$. By aligning placement and pricing with demand and local costs, it avoids inefficiencies from over-installation or mispricing, enabling more effective infrastructure deployment.

\subsection{Evaluation of ABO-MPN Approximation Accuracy} \label{sec: AA}
We evaluate the accuracy of each level in the ABO-MPN approximation framework:

\noindent\textbf{Layer 1: Behavioural Decomposition.} We validate the independence assumption by comparing NCD equilibrium flows in decomposed and joint models. The observed deviation is less than 1\%, justifying the simplification under low EV penetration rates.

\noindent\textbf{Layer 2: Integer Adjustment.} We first solve the GA-EV problem with relaxed (continuous) placement variables \( x \), then apply Algorithm~\ref{algorithm 2} to obtain integer solutions within budget. Table~\ref{tab:objective_values_comparison} shows that the social cost deviation between relaxed and integer solutions stays below 1\%, with a maximum of 0.9\%. As the relaxed solution upper-bounds the true optimum, this yields a conservative estimate of the framework’s optimality gap.

\section{Conclusion}\label{conclusion}

This work tackles the strategic planning of interdependent, congestible infrastructure in settings with self-interested, congestion-aware agents. Focusing on EV fast-charging, we propose a joint optimisation framework for station placement and pricing, anticipating agent responses via equilibrium modelling.
We formulate the problem as a bilevel MINLP, with agent behaviour captured by a coupled non-atomic congestion game. To manage its complexity, we introduce ABO-MPN, a two-layer approximation method combining behavioural decomposition and integer relaxation.

Results show that joint optimisation consistently outperforms pricing- or placement-only baselines, reducing social cost by over 10\% and up to 40\% in high-budget scenarios. The framework ensures operator profitability, enhances public utility, and is robust to behavioural and network variations.

From a MAS perspective, this work contributes a generalisable framework for coordination -aware resource planning, integrating decentralised agent behaviour with centralised, interpretable optimisation. Future work could extend to dynamic pricing, grid constraints, or competitive multi-provider settings.

\section{Acknowledgement}
This paper has been accepted for publication in the Proceedings of the 22nd European Conference on Multi-Agent Systems (EUMAS 2025). The final authenticated version will be published by Springer on SpringerLink.

\end{document}